\documentclass[letterpaper, 10 pt, conference]{ieeeconf}  % Comment this line out
\IEEEoverridecommandlockouts                              % This command is only
\usepackage{amsmath}
\usepackage{amssymb}
\usepackage{mathtools}
\usepackage{algpseudocode}
\usepackage[dvipsnames]{xcolor}
\usepackage{setspace}
\usepackage{multirow}
\usepackage{tikz}

\newtheorem{thm}{Theorem}

\newtheorem{defn}{Definition}
\newtheorem{rem}{Remark}
\newtheorem{assum}{Assumption}
\newtheorem{prob}{Problem}
\newtheorem{prop}{Proposition}

\newcommand{\R}{\mathbb{R}}
\newcommand{\Hm}{\mathcal{H}}
\newcommand{\D}{\mathcal{D}}
\newcommand{\C}{\mathcal{C}}

\usetikzlibrary{shapes,arrows,calc,positioning}
\tikzset{
	input/.style={coordinate},
	midpoint/.style={coordinate},
	arrow/.style={draw,-latex,thick},
	line/.style={draw,-,thick},
	sum/.style={draw,circle,node distance=1cm,thick},
	block/.style = {draw, minimum height=1cm, minimum width=2cm, thick, rounded corners=.1cm, align = center},
	blockc/.style = {draw, minimum height=1.75cm, minimum width=4.5cm, thick, rounded corners=.1cm, align = center},
	output/.style = {coordinate,node distance=2cm},
}

\title{\LARGE \bf
An Output Feedback Q-learning Algorithm for Optimal Control of Nonlinear Systems with Koopman Linear Embedding
}

\author{Victor G. Lopez, Malte Heinrich, and Matthias A. Müller% <-this % stops a space
\thanks{This  work was supported by the Deutsche Forschungsgemeinschaft (DFG, German Research Foundation) - 535860958}% <-this % stops a space
\thanks{V. G. Lopez, M. Heinrich, and M. A. Müller are with the Leibniz University Hannover, Institute of Automatic Control, 30167 Hannover, Germany
        {\tt\small \{lopez,mueller\}@irt.uni-hannover.de, malte.heinrich2@stud.uni-hannover.de}}%
}

\begin{document}

\maketitle
\thispagestyle{empty}
\pagestyle{empty}

%%%%%%%%%%%%%%%%%%%%%%%%%%%%%%%%%%%%%%%%%%%%%%%%%%%%%%%%%%%%%%%%%%%%%%%%%%%%%%%%
\begin{abstract}

In the reinforcement learning literature, strong theoretical guarantees have been obtained for algorithms applicable to LTI systems. However, in the nonlinear case only weaker results have been obtained for algorithms that mostly rely on the use of function approximation strategies like, for example, neural networks. In this paper, we study the applicability of a known output-feedback Q-learning algorithm to the class of nonlinear systems that admit a Koopman linear embedding. This algorithm uses only input-output data, and no knowledge of either the system model or the Koopman lifting functions is required. Moreover, no function approximation techniques are used, and the same theoretical guarantees as for LTI systems are preserved. Furthermore, we analyze the performance of the algorithm when the Koopman linear embedding is only an approximation of the real nonlinear system. A simulation example verifies the applicability of this method.

\end{abstract}

%%%%%%%%%%%%%%%%%%%%%%%%%%%%%%%%%%%%%%%%%%%%%%%%%%%%%%%%%%%%%%%%%%%%%%%%%%%%%%%%
\section{INTRODUCTION}

In the past few decades, reinforcement learning (RL) algorithms have been extensively investigated for their use to determine optimal control policies for dynamical systems \cite{SuBa:18,Bert05}. Most of these algorithms do not require any knowledge of the system model, and instead employ measured data to iteratively improve a control policy until it becomes optimal. When the system has linear dynamics and the cost function is quadratic, rigorous theoretical guarantees about stability, convergence and optimality have been obtained \cite{BrYdBa:94,KiumarsiLewJia2017,LopezAlsMue2023,AlsaltiLopMulL4DC2024}. The main reason for this success is the fact that, for linear systems and quadratic costs, only linear policies need to be considered and the solution of the corresponding Bellman equation is known to have a quadratic form.

In contrast, determining optimal controllers for nonlinear systems presents a greater challenge. In general, both the control input and the associated value function are unknown nonlinear functions that are difficult to compute in practice. To address this problem, function approximation techniques are often employed, in particular neural networks (NNs) \cite{Wangetal2024,KiVaMoLe:18,Liuetal:21}. Although NNs are known to have universal function approximation capabilities, their use in RL algorithms complicates their theoretical analysis. Training methods for (very) shallow neural networks have been proposed that allow to obtain some guarantees \cite{KiVaMoLe:18,Liuetal:21,DierksJag2011}, but for such shallow networks the number of nodes required to have accurate approximations may become impractically large. Analyzing RL algorithms that use deeper NNs remains a difficult task. Moreover, even when an adequate NN architecture is used, it is in general unclear which conditions in the measured data guarantee obtaining an accurate function approximation. Persistence of excitation (PE) conditions have been considered \cite{AltamimiLewAbu2008,Liuetal:21} that can be difficult to verify. This contrasts with modern conditions on persistence of excitation for linear systems, where PE verification consists of computing the rank of a matrix \cite{WillemsRapMarDe2005}. 

In \cite{LopezAlsMue2023}, a Q-learning algorithm with strong theoretical guarantees for linear time invariant (LTI) systems was presented. This algorithm exploits the result known as Willems' fundamental lemma \cite{WillemsRapMarDe2005} that provides an elegant condition for persistence of excitation. Using the data collected in this fashion, \cite{LopezAlsMue2023} showed how to reformulate the Bellman equation as a generalized Lyapunov equation that can be solved very efficiently. Then, the resulting algorithm was shown to have a quadratic rate of converge to the optimal control solution. In \cite{AlsaltiLopMulL4DC2024}, this algorithm was extended to the output-feedback case, where the full state of the system is not required to be measured. Instead, \cite{AlsaltiLopMulL4DC2024} exploits the construction of a nonminimal state described in \cite{AlsaltiLopMulTAC2025}, which employs only input-output measurements. The guaranteed properties on convergence and optimality of these algorithms are currently not shared by any RL algorithm for nonlinear systems.

The goal of this paper is to obtain optimal controllers for the class of nonlinear systems that admit a \emph{Koopman linear embedding} (KLE) \cite{ShangCorZhe2024}. Koopman theory is a mathematical tool that allows to lift the dynamics of a nonlinear system into a higher-dimensional space where, in some cases, the dynamics are linear \cite{MauroyMezSus:20}. This can facilitate the design of controllers for nonlinear systems \cite{Strasseretal2026}. Also in the RL context, Koopman theory has been used \cite{Dongeetal2024,SongWanXu2021}. However, most of these methods require the difficult task of determining an adequate set of lifting functions for the given system. In contrast, the result in \cite{ShangCorZhe2024} provided a data-based representation of nonlinear systems with KLE using only input-output data.

In this paper, we first show conditions for the Koopman linear embedding to be stabilizable and observable. Then, we use the output-feedback Q-learning algorithm in \cite{AlsaltiLopMulL4DC2024} to determine an optimal controller for the nonlinear system without knowledge of the model or the lifting functions. We show the conditions under which the theoretical guarantees of this algorithm are preserved: the algorithm yields stabilizing policies at each iteration and converges at a quadratic rate to the optimal control policy. Since, in general, the KLE is not exact, we finally analyze the performance of the algorithm in the presence of approximation errors.

In the remainder of this paper, Section~\ref{secprob} describes the systems with KLE and formulates the main objective of this work. Section~\ref{secmain} presents the Q-learning algorithm that solves the optimal control problem in the case of an exact KLE, and also analyzes the case when there is an approximation error. Section~\ref{secsim} presents a numerical example and Section~\ref{secconc} concludes the paper.

%%%%%%%%%%%%%%%%%%%%%%%%%%%%%%%%%%%%%%%%%%%%%%%%%%%%%%%%%%%%%%%%%%%%%%%%%%%%%%%%
\section{PROBLEM FORMULATION}
\label{secprob}

In this paper, we consider nonlinear systems with a Koopman linear embedding as defined in \cite{ShangCorZhe2024}. In particular, consider a system of the form 
\begin{subequations}
	\begin{align}
		x_{t+1} & = f(x_t,u_t), \\
		y_t & = h(x_t),
	\end{align}
	\label{nlsys}%
\end{subequations}%
where $x_t \in \R^n$, $u_t \in \R^m$, and $y_t \in \R^p$ are the state, input, and output vectors of the system at time $t$, respectively. The functions $f:\R^n \times \R^m \rightarrow \R^n$ and $h:\R^n \rightarrow \R^p$ are unknown continuous functions satisfying $f(0,0)=0$, $h(0)=0$. We let this system satisfy the following assumption.

\begin{assum}
	\label{assstabobs}
	System \eqref{nlsys} is stabilizable in the sense that, for every initial condition $x_0 \in \R^n$, there exist a control sequence $\{ u_t \}_{t=0}^\infty$ such that $x_t \rightarrow 0$ as $t \rightarrow \infty$. Moreover, the system is detectable in the sense that $y_t \rightarrow 0$ as $t \rightarrow \infty$ implies $x_t \rightarrow 0$ as well.  
\end{assum}  

The following definition is taken from \cite{ShangCorZhe2024}. Notice that we consider systems without a direct feedthrough term from input to output.

\begin{defn}[Koopman Linear Embedding]
	\label{defkle}
	System \eqref{nlsys} is said to admit a Koopman linear embedding (KLE) if there exist a scalar $\eta \in \mathbb{N}$, matrices $(A,B,C)$ of appropriate dimensions, and a set of linearly independent functions $\psi_i : \R^n \rightarrow \R$, $i=1,\ldots, \eta$, such that
	\begin{subequations}
		\begin{align}
			\Psi(x_{t+1}) & = A \Psi(x_t) + B u_t, \\
			y_t & = C \Psi(x_t),
		\end{align}
		\label{phikle}%
	\end{subequations}%
	holds for all $t$, where $\Psi(x_t) = [\psi_1(x_t) \quad \cdots \quad \psi_\eta(x_t)]^\top$.
\end{defn}

In the Koopman theory literature, the lifting functions $\psi_i$ are known as \textit{observables}. It is known that general nonlinear systems do not admit an exact KLE \cite{Strasseretal2026,IacobTotSch2024}, and conditions for the existence of the KLE have recently been studied in \cite{Shangetal2026}. However, in many cases the dynamics in \eqref{phikle} may still provide a sufficiently close approximation of the nonlinear dynamics \eqref{nlsys} on some domain of operation. In fact, a variety of controllers for system \eqref{nlsys} designed under the assumption of an exact KLE have been successfully used in practice \cite{Strasseretal2026}.

We are interested in determining a control policy that stabilizes \eqref{nlsys} while minimizing the quadratic cost function
\begin{equation}
	J=\sum_{t=0}^\infty y_t^\top Q y_t + u_t^\top R u_t
	\label{cost}
\end{equation}
with $Q,R \succ 0$. If the system admits a KLE, it may be natural to attempt using a controller of the form $u_t = -K \Psi(x_t)$ to stabilize \eqref{nlsys} by stabilizing \eqref{phikle}. However, it can be very challenging to determine a suitable set of lifting functions $\psi_i$ that satisfy the conditions in Definition~\ref{defkle}. Moreover, in our setting the functions $f$ and $h$ (as well as the matrices $A$, $B$, and $C$) are unknown. Hence, the problem considered in this paper is formulated as follows.

\begin{prob}
	\label{prob1}
	Let the system \eqref{nlsys} admit a KLE as in Definition~\ref{defkle}. Using only input-output data, and without using knowledge of the mathematical models or of the lifting functions $\psi_i$, determine a control policy $u$ that stabilizes \eqref{nlsys} and that minimizes the cost function \eqref{cost}.
\end{prob}

In this paper, we propose the use of a Q-learning algorithm to solve Problem \ref{prob1}. The use of reinforcement learning strategies for control of nonlinear systems has been investigated for many years \cite{KiVaMoLe:18,Liuetal:21}. However, the available theoretical guarantees typically rely on the use of shallow NNs to approximate the solution of the Bellman equation, as well as the resulting nonlinear control policy \cite{KiVaMoLe:18}. For example, 
\begin{equation}
	\hat V(x_t) = W_c \phi_c(x_t), \quad \hat u(x_t) = W_a \phi_a(x_t),
	\label{actorcritic}
\end{equation}
are common expressions for the \emph{critic} approximator and the \emph{actor} approximator, respectively. Here, $\phi_c: \R^n \rightarrow \R^{N_c}$ and $\phi_a : \R^n \rightarrow \R^{N_a}$ are user-defined basis functions, and $W_c \in \R^{1 \times N_c}, \, W_a \in \R^{m \times N_a}$ are the weights to be determined. Training rules have been proposed that lead to bounded function approximation errors \cite{DierksJag2011,KiVaMoLe:18}, hence implying suboptimal controllers. In the following section we show how this issue can be overcome when solving Problem~\ref{prob1} for systems with a Koopman linear embedding.

%%%%%%%%%%%%%%%%%%%%%%%%%%%%%%%%%%%%%%%%%%%%%%%%%%%%%%%%%%%%%%%%%%%%%%%%%%%%%%%%
\section{OPTIMAL CONTROL FOR NONLINEAR SYSTEMS WITH LINEAR KOOPMAN EMBEDDING}
\label{secmain}

Suppose that system \eqref{nlsys} admits a KLE. The expression \eqref{phikle} suggests to consider a linear system of the form
\begin{subequations}
	\begin{align}
		\xi_{t+1} & = A \xi_t + B u_t, \label{xkle}\\
		y_t & = C \xi_t, \label{ykle}
	\end{align}
	\label{kle}%
\end{subequations}%
where $\xi \in \R^{\eta}$ is an auxiliary state variable. Note that, if the initial condition can be written as $\xi_0 = \Psi(x_0)$ for some $x_0$, then $\xi_t = \Psi(x_t)$ for all $t>0$. Throughout this paper, we refer to \eqref{kle} as a KLE of \eqref{nlsys}. In this section, we first analyze the conditions that allow to design optimal controllers for system \eqref{nlsys} by exploiting the linear structure of \eqref{kle}. Then, we present a Q-learning algorithm to solve Problem~\ref{prob1} in the case of an exact KLE and show its theoretical properties. Finally, we consider the case when the KLE is inexact.

\subsection{Conditions for optimal control of systems with KLE}
\label{seccond}

By Definition \ref{defkle}, every input-output trajectory $\{u_t,y_t\}_{t=0}^\infty$ of the nonlinear system \eqref{nlsys} is also an input-output trajectory of the lifted system \eqref{kle}. Hence, it is intuitive that an optimal controller for \eqref{kle} should also be optimal for \eqref{nlsys}. In this subsection we show that this is the case as long as suitable conditions hold. 

As it is discussed in \cite{ShangCorZhe2024}, a given KLE \eqref{kle} may not be controllable and/or observable. However, like any linear state space representation, the realization \eqref{kle} is not unique. It is not difficult to show that, if a system admits a KLE, then there exits an equivalent KLE that is observable.

\begin{prop}
	\label{propobs}
	If system \eqref{nlsys} admits a KLE, then there exists an observable KLE realization.
\end{prop}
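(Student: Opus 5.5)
We will use the Kalman observability decomposition of a given KLE $(A,B,C,\Psi)$ and build a new set of lifting functions from the observable part. Suppose \eqref{nlsys} admits a KLE with matrices $(A,B,C)$ and linearly independent functions $\psi_1,\ldots,\psi_\eta$. If $(A,C)$ is already observable there is nothing to prove. Otherwise, let $r<\eta$ be the rank of the observability matrix $\mathcal{O}=[C^\top \ (CA)^\top \ \cdots \ (CA^{\eta-1})^\top]^\top$. We pick a nonsingular $T\in\R^{\eta\times\eta}$ whose first $r$ rows span the row space of $\mathcal{O}$. In the coordinates $\bar\xi = T\xi$, this gives the standard form
\begin{equation*}
	TAT^{-1}=\begin{bmatrix} A_{o} & 0\\ A_{21} & A_{\bar o}\end{bmatrix},\quad TB=\begin{bmatrix} B_o\\ B_{\bar o}\end{bmatrix},\quad CT^{-1}=\begin{bmatrix} C_o & 0\end{bmatrix},
\end{equation*}
with $(A_o,C_o)$ observable.

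Next we define the candidate new lifting. Let $\bar\Psi(x)=T\Psi(x)$ and let $\Psi_o(x)\in\R^r$ be its first $r$ entries. Multiplying \eqref{phikle} by $T$ and reading off the first block row, the zero upper-right block of $TAT^{-1}$ gives
\[
\Psi_o(x_{t+1})=A_o\Psi_o(x_t)+B_ou_t .
\]
The structure of $CT^{-1}$ gives $y_t=C_o\Psi_o(x_t)$. So $(A_o,B_o,C_o)$ with the observables $\Psi_o$ satisfies the equations of Definition~\ref{defkle} and is observable. Equivalently, $\Psi_o=\mathcal{O}_r\Psi$, where $\mathcal{O}_r$ collects $r$ linearly independent rows of $\mathcal{O}$, so its components are the functions $x\mapsto CA^k\Psi(x)$.

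The main step I expect to need care is checking that the components of $\Psi_o$ are linearly independent functions, as Definition~\ref{defkle} requires. Suppose $v^\top\Psi_o(x)=0$ for all $x\in\R^n$. Then $(v^\top \mathcal{O}_r)\Psi\equiv 0$. Linear independence of $\psi_1,\ldots,\psi_\eta$ forces $v^\top\mathcal{O}_r=0$, and since $\mathcal{O}_r$ has full row rank, $v=0$. Thus the new observables are linearly independent. Hence $(A_o,B_o,C_o,\Psi_o)$ is an observable KLE realization of \eqref{nlsys} with $\eta$ replaced by $r$.

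Finally, the degenerate case $r=0$ (that is, $C=0$ and $y\equiv 0$) needs to be noted separately. In that case an observable realization is the trivial zero-dimensional one, or the statement is read as holding vacuously. I would mention this briefly as a remark rather than treat it as part of the main argument.
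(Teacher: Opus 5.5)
Your proposal is correct and is essentially the paper's argument. Both use the Kalman observability decomposition and take as new observables the coordinates of the observable block, i.e., a full-row-rank linear map of $\Psi$; the paper puts the unobservable block first and uses $\bar\xi = [0 \;\, I]\,T_{KD}^{-1}\xi$, which differs from your construction only in block ordering and choice of basis. Your explicit check that the components of $\Psi_o$ remain linearly independent, and your remark on the degenerate case $C=0$, fill in details that the paper's proof leaves implicit.
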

\begin{proof}
	Consider an unobservable realization \eqref{kle}. Using Kalman decomposition, there exists a similarity transformation $T_{KD}$ such that
	\begin{equation*}
		T_{KD}^{-1} A T_{KD} = \begin{bmatrix} A_{\bar o} & A_{12} \\ 0 & A_o \end{bmatrix}, \quad C T_{KD} = \begin{bmatrix} 0 & C_o \end{bmatrix},
	\end{equation*}
	and $(A_o,C_o)$ is an observable pair. Hence, the transformation $\bar \xi = [0 \; I] T_{KD}^{-1} \xi$ leads to an observable realization.
\end{proof}

By Proposition~\ref{propobs}, we can assume that the KLE \eqref{kle} is observable without loss of generality. We now show that if \eqref{nlsys} is stabilizable then there is a stabilizable realization \eqref{kle}.

\begin{prop}
	\label{propstab}
	Let system \eqref{nlsys} admit a KLE and let Assumption~\ref{assstabobs} hold. Then, there exists an observable and stabilizable KLE realization.
\end{prop}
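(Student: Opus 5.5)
By Proposition~\ref{propobs}, we may start from an observable KLE realization \eqref{kle} with lifted state $\xi_t=\Psi(x_t)$, where we keep the notation $\Psi$ for the transformed lifting functions. The plan is to show that this observable realization is automatically stabilizable. The argument uses the nonlinear stabilizability and detectability of Assumption~\ref{assstabobs} together with the observability of $(A,C)$. The key observation is that, for an observable pair, convergence of inputs and outputs to zero forces convergence of the state to zero, since the state is a linear function of a finite window of past inputs and outputs.

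\textbf{Step 1: trajectories of the lifted state can be steered to zero.} Fix any $x_0\in\R^n$. By Assumption~\ref{assstabobs} there is a sequence $\{u_t\}$ with $x_t\to 0$. By continuity of $h$ and $h(0)=0$, this gives $y_t\to0$. We also need $u_t\to 0$. If the steering sequence does not already satisfy this, we pass to a modified sequence. For instance, we can argue that a stabilizing sequence can be chosen with $u_t\to0$, or we can work directly with the observability argument, which only needs $\xi_t$ expressed through $y_{t},\dots,y_{t+\eta-1}$ and $u_t,\dots,u_{t+\eta-2}$. I expect this to be the main obstacle. As stated, Assumption~\ref{assstabobs} does not force $u_t\to0$, and the linear map from the output window to $\xi_t$ also involves the inputs through the Toeplitz matrix. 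My first attempt would be to note that $\Psi(x_t)=\xi_t$ and that the $\psi_i$ are continuous. Continuity should be assumed or justified, since the $\psi_i$ are generated from $f$ and $h$ in the observable realization through $\xi_t=\mathcal{O}^{\dagger}(Y_t-\mathcal{T}U_t)$. Then $\xi_t=\Psi(x_t)\to\Psi(0)=0$ directly, where $\Psi(0)=0$ follows from $f(0,0)=0$ and observability. This avoids needing $u_t\to0$ altogether.

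\textbf{Step 2: from steerability of $\Psi(x_0)$ to stabilizability of $(A,B)$.} Step 1 shows that every vector in $\mathcal{S}=\operatorname{span}\{\Psi(x):x\in\R^n\}$ can be driven to zero by the linear system \eqref{xkle}. The set of initial states that are asymptotically steerable to zero is a linear subspace, by linearity and superposition of input sequences. Hence every $\xi_0\in\mathcal{S}$ is steerable. Since the $\psi_i$ are linearly independent functions, $\mathcal{S}=\R^\eta$: otherwise some nonzero $v$ satisfies $v^\top\Psi\equiv0$. Every initial state being steerable to zero is equivalent to stabilizability of $(A,B)$. Equivalently, we can use the PBH test: if $v^\top[A-\lambda I\;\;B]=0$ with $|\lambda|\ge1$, then $v^\top\xi_t=\lambda^t v^\top\xi_0$ for every input. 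So $v^\top\Psi(x_0)=0$ for all $x_0$, which contradicts linear independence.

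\textbf{Step 3: conclusion.} The realization obtained from Proposition~\ref{propobs} is observable by construction, and Step 2 shows it is stabilizable, which proves the claim. We must check that the reduced lifting functions $[0\;I]T_{KD}^{-1}\Psi$ remain linearly independent; this holds because the rows of $[0\;I]T_{KD}^{-1}$ are independent. Detectability from Assumption~\ref{assstabobs} is not strictly needed once observability is available. I would remark on this, or use detectability only to justify $y_t\to0\Rightarrow\xi_t\to0$ in Step 1 if the continuity route is avoided.
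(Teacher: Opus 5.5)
Your proof is correct and follows the same outline as the paper's. You start from the observable realization of Proposition~\ref{propobs} and steer each realizable state $\Psi(x_0)$ to zero using Assumption~\ref{assstabobs}. You then extend to all of $\R^\eta$ via linear independence of the $\psi_i$ and superposition; your span/PBH argument is the paper's Case~2.

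The difference is in the first step, and there your route is the sounder one. The paper concludes $\xi_t\to0$ from $y_t\to0$ ``by observability''. But since $\xi_t=\mathcal{O}^\dagger(Y_t-\mathcal{T}U_t)$, this also needs $u_t\to0$, which Assumption~\ref{assstabobs} does not provide. For example, $\xi_{1,t+1}=\xi_{2,t}+u_t$, $\xi_{2,t+1}=-u_t$, $y_t=\xi_{1,t}$ is observable, yet $u_t\equiv1$ gives $y_t=0$ and $\xi_{2,t}=-1$ for all $t\ge2$.

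Your continuity argument avoids this, and you can drop the hedge ``should be assumed or justified''. The KLE holds along every trajectory, in particular the zero-input ones, so observability gives $\Psi(x)=\mathcal{O}^\dagger$ applied to the stacked zero-input outputs $h(x),h(f(x,0)),h(f(f(x,0),0)),\dots$. Hence $\Psi$ is continuous with $\Psi(0)=0$; this uses both $f(0,0)=0$ and $h(0)=0$, not only the former. So $x_t\to0$ gives $\xi_t=\Psi(x_t)\to0$ for any input.

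You also check something the paper omits: the reduced functions $[0\;I]T_{KD}^{-1}\Psi$ remain linearly independent. That is exactly what Case~2 needs when it is applied to the observable realization.
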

\begin{proof}
	An observable realization exists by Proposition~\ref{propobs}. We now show that this realization is also stabilizable. The proof is performed by showing that for any initial state $\xi_0$ there exists an input sequence such that $\xi_t \rightarrow 0$ as $t \rightarrow \infty$. Case 1: the initial state is realizable by the lifting functions as $\xi_0 = \Psi(x_0)$ for some $x_0$. Then, consider the input sequence that, when applied to \eqref{nlsys}, leads to $x_t \rightarrow 0$ (which exists by Assumption~\ref{assstabobs}). Since  the function $h$ in \eqref{nlsys} is continuous, this implies that $y_t \rightarrow 0$. Observability of \eqref{kle} leads to $\xi_t \rightarrow 0$.
	
	Case 2: $\xi_0 \neq \Psi(x_0)$ for any $x_0$. Since the functions $\psi_i$, $i=1,\ldots,\eta$, are linearly independent, there exist vectors $\bar x_0^i$ such that the matrix $[\Psi(\bar x_0^1) \; \cdots \; \Psi(\bar x_0^\eta)]$ is nonsingular. Let the sequence $\bar u^i := \{ \bar u_t^i \}_{t=0}^\infty$ be such that, when applied to \eqref{nlsys} with initial condition $\bar x_0^i$, leads to $\bar x_t^i \rightarrow 0$ as $t\rightarrow \infty$. Moreover, define the vector $\alpha \in \R^\eta$ such that $[\Psi(\bar x_0^1) \; \cdots \; \Psi(\bar x_0^\eta)] \alpha = \xi_0$. Then, from linearity of \eqref{kle} and the arguments in Case 1, it follows that the inputs $u_t = \sum_{i=1}^\eta \alpha_i \bar u_t^i$ lead to $\xi_t \rightarrow 0$.
\end{proof}

In the following theorem, we show that Assumption~\ref{assstabobs} and Propositions~\ref{propobs} and \ref{propstab} together provide the conditions to design stabilizing optimal controllers for system \eqref{nlsys} by exploiting its KLE \eqref{kle}. Notice that conditions for stability and optimality of systems with a Koopman linear embedding were studied in \cite[Section~3]{ShangCorZhe2025} in a different setting. The following result expresses the conditions differently, suitably formulated for the setting considered in this paper. 

\begin{thm}
	\label{thmstabopt}
	Let system \eqref{nlsys} admit a KLE as in Definition~\ref{defkle}, let Assumption~\ref{assstabobs} hold, and consider the cost function \eqref{cost}. There exist lifting functions $\psi_i$, $i=1,\ldots,\eta$, such that the control input $u_t^*= -K^* \Psi(x_t)$ with
	\begin{equation}
		K^*= (R+B^\top P B)^{-1} B^\top PA,
		\label{kopt}
	\end{equation}
	and where $P \succ 0$ satisfies the algebraic Riccati equation
	 \begin{equation}
	 	Q_\xi + A^\top P A - P - A^\top PB (R+B^\top P B)^{-1} B^\top P A = 0,
	 	\label{are}
	 \end{equation}
	 with $Q_\xi = C^\top Q C$, globally asymptotically stabilizes system \eqref{nlsys} and minimizes the cost \eqref{cost}.
\end{thm}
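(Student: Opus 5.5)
The plan is to choose, via Proposition~\ref{propstab}, lifting functions (that is, a realization of \eqref{kle}) for which $(A,B)$ is stabilizable and $(A,C)$ is observable. Concretely, we take $\bar\Psi = [0 \; I]T_{KD}^{-1}\Psi$. Its components remain linearly independent, since $[0\;I]T_{KD}^{-1}$ has full row rank and the $\psi_i$ are linearly independent. With $Q\succ 0$, observability of $(A,C)$ gives observability of $(A,Q_\xi^{1/2})$, because $Q_\xi^{1/2}$ and $Q^{1/2}C$ have the same kernel. Standard discrete-time LQR theory then yields a unique stabilizing solution $P\succeq 0$ of \eqref{are}, and it is in fact $P\succ 0$ by observability. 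For this solution $A-BK^*$ is Schur, with $K^*$ given by \eqref{kopt}.

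For optimality, we use that every input-output trajectory of \eqref{nlsys} from $x_0$ is an input-output trajectory of \eqref{kle} from $\xi_0=\Psi(x_0)$. Hence the cost \eqref{cost} of any input sequence applied to \eqref{nlsys} equals the LQR cost $\sum_t \xi_t^\top Q_\xi \xi_t + u_t^\top R u_t$ of the lifted system along $\xi_t=\Psi(x_t)$. The lifted LQR problem has minimal value $\Psi(x_0)^\top P\Psi(x_0)$, attained over all inputs by $u_t=-K^*\xi_t$, which can be verified by the usual completion-of-squares identity with \eqref{are}. Since $\xi_t=\Psi(x_t)$, this optimal input is exactly $u_t^*=-K^*\Psi(x_t)$, a feedback realizable on \eqref{nlsys}. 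Any admissible input for \eqref{nlsys} is also admissible for the lifted problem, so no input can achieve a lower cost on \eqref{nlsys}, and $u^*$ is optimal.

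Stability is the delicate step, because convergence $\Psi(x_t)\to 0$ does not immediately give $x_t\to 0$. Under $u^*$ we have $\xi_{t+1}=(A-BK^*)\xi_t$, so $\Psi(x_t)\to 0$ exponentially for every $x_0$. Then $y_t=C\Psi(x_t)\to 0$, and the detectability part of Assumption~\ref{assstabobs} gives $x_t\to 0$ for all $x_0$, which is global attractivity. Lyapunov stability of the origin in the $x$-coordinates is not directly implied by this argument. I would handle it by interpreting ``globally asymptotically stabilizes'' as global convergence $x_t\to 0$ together with exponential stability of the lifted state. If genuine stability in $x$ is required, the fallback is to argue via continuity of $f$, $h$ and $\Psi$ around the origin, or to add a mild assumption. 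I expect this step to be the main obstacle, since the tools available here (detectability stated only as an asymptotic implication) give convergence but not an $\epsilon$–$\delta$ bound.
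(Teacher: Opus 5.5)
Your proof is correct and follows the paper's route. You pick the observable and stabilizable realization from Propositions~\ref{propobs}--\ref{propstab}, rewrite \eqref{cost} as an LQR cost in $\xi$, obtain optimality from the fact that every input--output trajectory of \eqref{nlsys} is one of \eqref{kle}, and get $x_t\to 0$ from $y_t\to 0$ plus the detectability part of Assumption~\ref{assstabobs}. Your checks on the linear independence of $\bar\Psi$ and on $P\succ 0$ fill in details the paper leaves implicit.

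On your Lyapunov-stability concern: the paper's proof also establishes only $x_t\to 0$ for every $x_0$. That is the convergence-based notion of stabilization used in Assumption~\ref{assstabobs}, so your interpretation matches exactly what the paper proves.
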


\begin{proof}
	Propositions~\ref{propobs} and \ref{propstab} guarantee the existence of a stabilizable and observable realization \eqref{kle}. Moreover, notice that \eqref{ykle} allows to write the cost \eqref{cost} as
	\begin{equation}
		J=\sum_{t=0}^\infty \xi_t^\top C^\top Q C \xi_t + u_t^\top R u_t.
		\label{costxi}
	\end{equation}
	Hence, \eqref{xkle} and \eqref{costxi} define a standard linear quadratic regulator (LQR) problem that, since \eqref{kle} is stabilizable and observable, is known to have the solution $u_t^*= -K^* \xi_t$ with $K^*$ as in \eqref{kopt}. Since $u^*$ stabilizes \eqref{kle}, $y_t \rightarrow 0$ as $t \rightarrow \infty$, which, by Definition~\ref{defkle}, corresponds also to the output of \eqref{nlsys}. By Assumption~\ref{assstabobs}, this implies $x_t \rightarrow 0$ as $t \rightarrow \infty$. Finally, assume for contradiction that there exists an initial condition $x_0$ and an input policy $\hat u \neq u^*$ such that, when applied to \eqref{nlsys}, yields a better cost $J(x_0,\hat u) < J(x_0,u^*)$. This input would yield the same result when applied to \eqref{kle}, contradicting the optimality of $u^*$.  
\end{proof}

The results obtained in this subsection allow us to design optimal controllers for system \eqref{nlsys} by focusing on the linear expression \eqref{kle}. Although the optimal input in Theorem~\ref{thmstabopt} depends explicitly on the lifting functions $\psi_i$, in the following subsection, we bypass the need for choosing a suitable set of lifting functions by using a Q-learning algorithm that requires only input-output data measured from the system.

\subsection{Output-feedback Q-learning with exact KLE}
\label{secexa}

In this subsection we use the output-feedback Q-learning algorithm in \cite{AlsaltiLopMulL4DC2024} to optimally control \eqref{kle} without information about neither the model nor the lifting functions $\Psi$ (and hence the state $\xi$). This algorithm was shown to be a computationally efficient method with strong theoretical guarantees.

To avoid the need for state information, \cite{AlsaltiLopMulL4DC2024} leveraged the result in \cite{AlsaltiLopMulTAC2025} where an alternative, nonminimal state for an LTI system as \eqref{kle} was constructed using input-output data. Specifically, this state is constructed as
\begin{equation}
	z_t = \begin{bmatrix} u_{[t-\ell,t-1]} \\ \Gamma y_{[t-\ell,t-1]} \end{bmatrix} \in \R^{m \ell+\eta},
	\label{zdef}
\end{equation}
where $\ell$ is the lag (i.e., the observability index) of system \eqref{kle}, $u_{[t-\ell,t-1]}:= [u_{t-\ell}^\top \quad \cdots \quad u_{t-1}^\top]^\top$ and similarly for $y_{[t-\ell,t-1]}$. Moreover, $\Gamma \in \R^{\eta \times p \ell}$ is a constant matrix that can be computed from data as described below. It was shown in \cite{AlsaltiLopMulTAC2025} that, if system \eqref{kle} is controllable, then every state $\xi \in \R^\eta$ can be written as $\xi = Tz$ for some full row rank matrix $T$. This means that $z$ in \eqref{zdef} is a valid state corresponding to a different realization of \eqref{kle}, i.e., there exist matrices $(\bar A, \bar B, \bar C)$ of appropriate dimensions such that
 \begin{subequations}
 	\begin{align}
 		z_{t+1} & = \bar A z_t + \bar B u_t, \label{zsysstate}\\
 		y_t & = \bar C z_t
 	\end{align}
 	\label{zsys}%
 \end{subequations}%
holds. This is also true for stabilizable systems as long as the following assumption is satisfied.

\begin{assum}
	\label{asst}
	There exists a full row rank matrix $T$ such that $\xi = T z$ for any state $\xi$ of \eqref{kle} and with $z$ as in \eqref{zdef}.
\end{assum}

From the proof in \cite[Theorem~4]{AlsaltiLopMulTAC2025}, it can be noted that a sufficient condition for Assumption~\ref{asst} is that the matrix $A$ in \eqref{kle} does not have uncontrollable modes at the origin. Notice that, under Assumption~\ref{asst}, \eqref{zsys} is a valid KLE for system \eqref{nlsys}. Now, a control input of the form $u_t=-Kz_t$ can be used to stabilize \eqref{zsys} while minimizing the cost \eqref{cost}.

The matrix $\Gamma$ in \eqref{zdef} can be obtained using input-output data if persistence of excitation conditions are satisfied. The following definition of PE data is adapted to our setting.

\begin{defn}
	\label{defpe}
	Let the system \eqref{nlsys} admit a KLE \eqref{kle} and let $\D$ be a set containing $\nu$ input-output trajectories of length $\ell+1$ collected from \eqref{nlsys}, i.e., $\D := \left\{ \{ u_t^j,y_t^j \}_{t=0}^{\ell}, j=1,\ldots,\nu \right\}$. The data $\D$ are said to be persistently excited of order $\eta + \ell + 1$ if $\text{rank} \left( \Hm_\D \right) = m(\ell+1)+\eta$, where
	\begin{equation}
		\Hm_\D := \begin{bmatrix}
			u_{[0,\ell]}^1 & u_{[0,\ell]}^2 & \cdots & u_{[0,\ell]}^\nu \\ y_{[0,\ell-1]}^1 & y_{[0,\ell-1]}^2 & \cdots & y_{[0,\ell-1]}^\nu
		\end{bmatrix}.
		\label{hank}
	\end{equation}
\end{defn}

In general, the rank condition in Definition~\ref{defpe} does not imply that the matrix $\Hm_\D$ has full rank. Moreover, it is needed that $\nu \geq m (\ell+1) + \eta$. Note that, as is standard in the data-driven control literature, one can either construct the matrix \eqref{hank} from one long input-output trajectory, or from multiple shorter ones. We also highlight that, different from the PE condition used in \cite{ShangCorZhe2024}, the condition in Definition~\ref{defpe} can be verified from measured data\footnote{If system \eqref{kle} is controllable (and not only stabilizable), then by Willems' lemma \cite{WillemsRapMarDe2005} the conditions in Definition~\ref{defpe} can be enforced by applying a persistently exciting input of order $\eta+\ell+1$ \cite[Lemma~1]{AlsaltiLopMulTAC2025}.}.

\begin{rem}
	In Definition~\ref{defpe}, as well as in the procedure described below, knowledge of the state dimension $\eta$ and the lag $\ell$ of \eqref{kle} is exploited. As it is discussed in \cite[Remark~3]{AlsaltiLopMulTAC2025}, we can replace both $\eta$ and $\ell$ by an upper bound $\tilde \eta \geq \eta \geq \ell$ and still obtain a valid nonminimal state $z_t$.
\end{rem}

For the benefit of the reader, we summarize in Algorithm~1 the steps to compute the matrix $\Gamma$ in \eqref{zdef}, which are described in \cite{AlsaltiLopMulTAC2025}. Algorithm~1 must only be used once to obtain $\Gamma$, and then the nonminimal state $z_t$ can be computed using stored input-output information as in \eqref{zdef} either in an offline fashion (e.g., to run Algorithm~2 below), or in an online fashion (e.g., to apply the controller $u_t=-K z_t$ to the system).

\begin{figure}[h]
	\hrule
	\vspace{0.1cm}
	{\bf Algorithm 1 \cite{AlsaltiLopMulTAC2025}: Construction of $\Gamma$ in \eqref{zdef}}
	{\hrule \small
		\begin{algorithmic}[1]
			\Procedure{}{}
			\State Collect persistently excited data $\D$ of order $\eta+\ell+1$ as in Definition~\ref{defpe} and construct the matrix
			\begin{equation}
				\Hm_\D^- := \begin{bmatrix}
					u_{[0,\ell-1]}^1 & u_{[0,\ell-1]}^2 & \cdots & u_{[0,\ell-1]}^\nu \\ y_{[0,\ell-1]}^1 & y_{[0,\ell-1]}^2 & \cdots & y_{[0,\ell-1]}^\nu
				\end{bmatrix}.
				\label{hankm}
			\end{equation}
			\State Determine a permutation matrix $\Pi \in \R^{p\ell \times p\ell}$ that reorders the rows of $\Hm_\D^-$ such that
			\begin{equation*}
				\Hm_\D^- = \begin{bmatrix}
					I_{m\ell} & 0 \\ 0 & \Pi
				\end{bmatrix} \begin{bmatrix} M \\ N \end{bmatrix},
			\end{equation*}
			where $M \in \R^{m\ell+\eta \times \nu}$ has full row rank, and $N \in \R^{p\ell-\eta \times \nu}$.
			\State Compute the inverse of $\Pi$ and define the matrix $\Gamma \in \R^{\eta \times p\ell}$ such that $\Pi^{-1}=\begin{bmatrix} \Gamma \\ G \end{bmatrix}$ for some $G \in \R^{p\ell-\eta \times p\ell}$.
			\EndProcedure
			\hrule
		\end{algorithmic}
	}
\end{figure}

Algorithm~2 below was proposed in \cite{AlsaltiLopMulL4DC2024} to stabilize system \eqref{zsys} while minimizing the cost \eqref{cost}. There, it was shown that the matrix $Z$ in \eqref{zmat} can always be chosen as nonsingular due to the persistence of excitation condition in Definition~\ref{defpe}. Moreover, as every policy iteration algorithm, an initial stabilizing matrix $K^0$ must be chosen, i.e., $\bar A- \bar BK^0$ must be Schur. An efficient method to determine this matrix from data was proposed in \cite{LopezAlsMue2023}. In the Bellman equation \eqref{bellman} and the control update \eqref{update} we use the definitions
\begin{equation*}
	\bar Q=\begin{bmatrix}
		Q & 0 \\ 0 & R
	\end{bmatrix}, \quad \Theta=\begin{bmatrix}
	\Theta_{zz} & \Theta_{uz}^\top \\ \Theta_{uz} & \Theta_{uu}
	\end{bmatrix},
\end{equation*}
where $Q, R$ are the weighting matrices of the cost function, $\Theta_{uz} \in \R^{m \times m\ell+\eta}$, and $\Theta_{uu} \in \R^{m \times m}$.
 
\begin{figure}[h]
	\hrule
	\vspace{0.1cm}
	{\bf Algorithm 2 \cite{AlsaltiLopMulL4DC2024}: Output-feedback Q-learning}
	{\hrule \small
		\begin{algorithmic}[1]
			\Procedure{}{}
			\State Run Algorithm 1 and store the data $\D$ and the matrix $\Gamma$. Construct the vectors $z_\ell^j$ and $z_{\ell+1}^j$, $j=1,\ldots,\nu$, as in \eqref{zdef}.
			\State Select a set of $\mu = m(\ell+1)+\eta$ indices, $k_j$, $j=1,\ldots,\mu$, from the set $\{ 1,\ldots,\nu \}$ such that the matrix 
			\begin{equation}
				Z=\begin{bmatrix}
					z_\ell^{k_1} & z_\ell^{k_2} & \cdots & z_\ell^{k_\mu} \\ u_\ell^{k_1} & u_\ell^{k_2} & \cdots & u_\ell^{k_\mu}
				\end{bmatrix} \in \R^{\mu \times \mu}
				\label{zmat}
			\end{equation}
			is nonsingular. Moreover, define
			\begin{equation}
				W=\begin{bmatrix}
					y_\ell^{k_1} & y_\ell^{k_2} & \cdots & y_\ell^{k_\mu} \\ u_\ell^{k_1} & u_\ell^{k_2} & \cdots & u_\ell^{k_\mu}
				\end{bmatrix} \in \R^{m+p \times \mu}.
			\end{equation}
			\State Set $i=0$ and determine an initial stabilizing matrix $K^0$.
			\State Using $K^i$, construct the matrix
			\begin{equation}
				\Sigma_i=\begin{bmatrix}
					z_{\ell+1}^{k_1} & z_{\ell+1}^{k_2} & \cdots & z_{\ell+1}^{k_\mu} \\ -K^iz_{\ell+1}^{k_1} & -K^iz_{\ell+1}^{k_2} & \cdots & -K^iz_{\ell+1}^{k_\mu}
				\end{bmatrix} \in \R^{\mu \times \mu}.
			\end{equation}
			 and solve for $\Theta^{i+1}$ from the equation 
			\begin{equation}
				Z^\top \Theta^{i+1} Z = W^\top \bar Q W + \Sigma_i^\top \Theta^{i+1} \Sigma_i.
				\label{bellman}
			\end{equation}
			\State Update the input policy as
			\begin{equation}
				K^{i+1}=\left( \Theta_{uu}^{i+1} \right)^{-1} \Theta_{uz}^{i+1}.
				\label{update}
			\end{equation}
			\State Let $i=i+1$ and go to Step 5. On convergence, stop. 
			\EndProcedure
			\hrule
		\end{algorithmic}
	}
\end{figure}

The following theorem shows that a control input of the form $u_t^* = -\bar K^* z_t$ is optimal for system \eqref{nlsys}, and that Algorithm~2 converges to the matrix $\bar K^*$.

\begin{thm}
	\label{thmsol}
	Let system \eqref{nlsys} admit a KLE as in Definition~\ref{defkle}, consider the cost function \eqref{cost}, and let Assumptions~\ref{assstabobs} and \ref{asst} hold. Then, a control input of the form $u_t^* = -\bar K^* z_t$, with $z_t$ as in \eqref{zdef}, stabilizes \eqref{nlsys} and minimizes the cost \eqref{cost}. Moreover, at every iteration of Algorithm~2, the Bellman equation \eqref{bellman} has a unique solution $\Theta^{i+1}$, the input $u_t^i = -K^i z_t$ stabilizes \eqref{nlsys}, and the algorithm converges quadratically as $\lim_{i\rightarrow \infty} K^i = \bar K^*$.
\end{thm}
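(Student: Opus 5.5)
The plan is to reduce everything to the linear setting and then invoke the known guarantees for the output-feedback Q-learning of \cite{AlsaltiLopMulL4DC2024}.

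\textbf{Step 1: the $z$-realization is a valid KLE.} By Propositions~\ref{propobs} and \ref{propstab}, we may take \eqref{kle} observable and stabilizable. Under Assumption~\ref{asst}, $\xi_t = T z_t$ with $T$ full row rank, and the nonminimal state $z_t$ in \eqref{zdef} obeys \eqref{zsys}. Since $z_t$ is built only from past inputs and outputs of \eqref{nlsys}, and these coincide with input-output trajectories of \eqref{kle}, the system \eqref{zsys} is itself a KLE of \eqref{nlsys}. Its lifted state is the stacked input-output window rather than $\Psi(x_t)$.

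\textbf{Step 2: the LQR problem for \eqref{zsys} is well posed.} I will check that $(\bar A,\bar B)$ is stabilizable. Any $z$ can be steered by inputs along the $\ell$ steps after the window, and $\xi$ can be driven to zero by stabilizability of \eqref{kle}. Once $\xi_t\to 0$ and $u_t\to 0$, the window $z_t$, which consists of past inputs and of outputs $y=C\xi$, also tends to zero. I expect $(\bar A,\bar C)$ to be only detectable, not observable, since $z$ is nonminimal. For such unobservable directions, zero output over a horizon with zero input forces the input part of $z$ to have been flushed out and $\Gamma y$ to vanish, so these modes decay in at most $\ell$ steps. This gives detectability of $(\bar A, Q^{1/2}\bar C)$. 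Hence the Riccati equation for \eqref{zsys} with weight $\bar C^\top Q\bar C$ has a unique stabilizing solution $\bar P\succeq 0$, which yields $\bar K^*$.

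\textbf{Step 3: optimality and stability for \eqref{nlsys}.} Stabilization of \eqref{nlsys} follows exactly as in Theorem~\ref{thmstabopt}. Under $u=-\bar K^*z$, $z_t\to0$, so $y_t=\bar C z_t\to 0$, and Assumption~\ref{assstabobs} gives $x_t\to 0$. For optimality I will use the same contradiction argument: any input that does better on \eqref{nlsys} would do better on \eqref{zsys} from the corresponding initial $z$. I should also note that the optimal cost of \eqref{zsys} depends on $z$ only through $\xi=Tz$, so it matches the optimal cost of the minimal realization.

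\textbf{Step 4: properties of Algorithm~2.} Algorithm~2 operates only on data of \eqref{zsys}, and under Definition~\ref{defpe} $Z$ is nonsingular. Therefore \eqref{bellman} is equivalent to the model-based Lyapunov equation
\begin{equation*}
\Theta^{i+1} = \bar Q_z + \begin{bmatrix}\bar A & \bar B\\ -K^i\bar A & -K^i\bar B\end{bmatrix}^\top \Theta^{i+1}\begin{bmatrix}\bar A & \bar B\\ -K^i\bar A & -K^i\bar B\end{bmatrix},
\end{equation*}
with $\bar Q_z$ collecting $\bar C^\top Q\bar C$ and $R$. This equation has a unique solution whenever $\bar A-\bar BK^i$ is Schur. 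Algorithm~2 is then Hewer/Kleinman policy iteration, so by induction each $K^i$ is stabilizing, and monotone convergence with a quadratic rate (via the Newton interpretation) follows as in \cite{AlsaltiLopMulL4DC2024,LopezAlsMue2023}.

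Stability of \eqref{nlsys} under each $u=-K^iz$ then follows from the same output argument as in Step 3. The main obstacle is Step 2, namely the detectability and stabilizability of the nonminimal realization, which is needed for the Riccati/Kleinman theory to apply. I would first try to show that $\bar A$ restricted to $\ker T$ is nilpotent.
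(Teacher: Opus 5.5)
Your proof is correct, but it is organized differently from the paper's. The paper never poses an LQR problem on the nonminimal realization \eqref{zsys}. It takes the optimal gain $K^*$ of the observable, stabilizable realization \eqref{kle} from Theorem~\ref{thmstabopt} and \emph{defines} $\bar K^* := K^*T$ using Assumption~\ref{asst}. It then cites \cite{AlsaltiLopMulL4DC2024} wholesale for unique solvability of \eqref{bellman}, stability of each iterate, and quadratic convergence $K^i \to K^*T$. Stability of \eqref{nlsys} is inherited through the output/detectability argument of Theorem~\ref{thmstabopt}.

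You instead treat \eqref{zsys} as an LQR problem in its own right. You prove stabilizability of $(\bar A,\bar B)$ and detectability of $(\bar A,\bar C)$, define $\bar K^*$ from the Riccati equation of \eqref{zsys}, and identify Algorithm~2 with Hewer's policy iteration on \eqref{zsys}. The paper's route is shorter and makes the limit explicitly the minimal-realization controller composed with $T$. Yours is more self-contained: it actually verifies the hypotheses that Riccati and Hewer theory need for the nonminimal state when \eqref{kle} is merely stabilizable, which the paper leaves implicit in the citation.

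The step you flag as the main obstacle is easy.
\begin{itemize}
\item Because the PE data make $Z$ nonsingular, the pairs $(z,u)$ generated by trajectories span $\R^{m\ell+\eta+m}$. Hence the shift structure of $(\bar A,\bar B)$ and the identities $T\bar A = AT$, $T\bar B = B$, $\bar C = CT$ hold on the whole space, not just on realized states. You should state this explicitly, since your flushing argument uses it.
\item It follows that $\bar C\bar A^t = CA^tT$, so observability of $(A,C)$ makes $\ker T$ exactly the unobservable subspace of $(\bar A,\bar C)$.
\item Your flushing argument then shows $\bar A^\ell$ vanishes on $\ker T$, which gives detectability. Your steering argument already covers stabilizability.
\item Finally, the cost from $z$ equals the cost from $\xi = Tz$ for every input sequence, so $\bar P = T^\top P T$. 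The same identities then yield $\bar K^* = K^*T$, so your limit coincides with the paper's.
\end{itemize}
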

\begin{proof}
	By Propositions~\ref{propobs} and \ref{propstab}, \eqref{kle} is a stabilizable and observable KLE for \eqref{nlsys} without loss of generality. By Theorem~\ref{thmstabopt}, the input $u_t^* = -K^* \xi_t$, with $K^*$ as in \eqref{kopt}-\eqref{are}, is stabilizing and optimal. By Assumption~\ref{asst}, we can write this optimal input as $u_t^* = -K^* T z_t = -\bar K^* z_t$, with $\bar K^* := K^* T$. It was shown in \cite{AlsaltiLopMulL4DC2024} that $K^i$ from Algorithm~2 quadratically converges to $\bar K^*$. Finally, it was also shown in \cite{AlsaltiLopMulL4DC2024} that the inputs $u_t^i = -K^i z_t$ obtained at each iteration stabilize \eqref{kle} and, hence, it can be shown as in the proof of Theorem~\ref{thmstabopt} that they also stabilize \eqref{nlsys}.
\end{proof}

Theorem~\ref{thmsol} provides our solution to Problem~\ref{prob1}. The input $u_t^* = -\bar K^* z_t$ optimally stabilizes system \eqref{nlsys} using only input-output information (cf. \eqref{zdef}). We remark that the input $u_t^* = -\bar K^* z_t$ is not linear in the original state $x$. Note from \eqref{zdef} that this is a dynamic output feedback controller that implicitly accounts for the nonlinearities of the system.

\subsection{On an inexact linear embedding and noisy data}
\label{secinex}

It is known that general nonlinear systems do not admit a Koopman linear embedding \cite{Strasseretal2026,IacobTotSch2024}. However, a KLE of sufficiently high order may provide a close enough approximation of the system dynamics for a KLE-based control design to be effective \cite{Strasseretal2026}. A thorough study of the effects of approximation errors in our proposed method is a subject of future work, but in this subsection we present initial results regarding the inherent robustness properties of Algorithm~2.

First, we analyze the case when the lifted states do not propagate exactly in a linear fashion, but the output does depend linearly on the lifted state. That is, instead of the system \eqref{zdef}, consider the system
\begin{subequations}
	\begin{align}
		z_{t+1} & = \bar A z_t + \bar B u_t + \varepsilon_t, \\
		y_t & = \bar C z_t,
	\end{align}
	\label{ikle1}%
\end{subequations}%
where  $\varepsilon_t \in \R^{m\ell+\eta}$ corresponds to the modeling error in the state dynamics at each time $t$. We consider this case because it provides interesting insights about the convergence of Algorithm~2. A more general case is discussed later.

The procedure that we follow now is an extension for our setting of the analysis performed in \cite{LopezAlsMue2023}. First, from the data collected from \eqref{ikle1}, we can write
\begin{equation*}
	W = \begin{bmatrix} \bar C z_\ell^{k_1} & \cdots & \bar C z_\ell^{k_\mu} \\ u_\ell^{k_1} & \cdots & u_\ell^{k_\mu} \end{bmatrix} = \C Z,
\end{equation*}
\begin{multline*}
	\Sigma_i = \begin{bmatrix} \bar A & \bar B \\ -K^i \bar A & -K^i \bar B \end{bmatrix} \begin{bmatrix} z_\ell^{k_1} & \cdots & z_\ell^{k_\mu} \\ u_\ell^{k_1} & \cdots & u_\ell^{k_\mu} \end{bmatrix} \\
	+ \begin{bmatrix} I \\ -K^i \end{bmatrix} \begin{bmatrix} \varepsilon_\ell^{k_1} & \cdots & \varepsilon_\ell^{k_\mu} \end{bmatrix} = \Phi_i Z + \mathcal{K}^i E.
\end{multline*}
where 
\begin{equation*}
	\Phi_i = \begin{bmatrix} \bar A & \bar B \\ -K^i \bar A & -K^i \bar B \end{bmatrix}, \quad \mathcal{C} = \begin{bmatrix} \bar C & 0 \\ 0 & I \end{bmatrix}, \quad \mathcal{K}^i=\begin{bmatrix} I \\ -K^i \end{bmatrix},
\end{equation*}
and $E = [\varepsilon_\ell^{k_1} \, \cdots \, \varepsilon_\ell^{k_\mu}]$. Pre- and post-multiplying \eqref{bellman} by $(Z^{-1})^\top$ and $Z^{-1}$, respectively, we get
\begin{equation*}
	\Theta^{i+1} = \C^\top \bar Q \C  + \left( \Phi_i + \mathcal{K}^i E Z^{-1} \right)^\top \Theta^{i+1} \left( \Phi_i + \mathcal{K}^i E Z^{-1} \right).
\end{equation*}
In the absence of modeling errors, this expression becomes
\begin{equation*}
	\Theta^{i+1} = \C^\top \bar Q \C + \Phi_i^\top \Theta^{i+1} \Phi_i,
\end{equation*}
which is a Lyapunov equation that has a unique solution $\Theta^i \succ 0$ if and only if $\Phi_i$ is Schur, which happens if and only if $\bar A - \bar B K^i$ is Schur \cite{LopezAlsMue2023}.

Now, since $\Phi^i=\mathcal{K}^i [\bar A \; \bar B ]$, then $\Phi_i +\mathcal{K}^i E Z^{-1} = \mathcal{K}^i ([\bar A \; \bar B ] + E Z^{-1}) = \mathcal{K}^i [\hat A \; \hat B ]=: \hat \Phi^i$, where $[\hat A \; \hat B ] := [\bar A \; \bar B ] + E Z^{-1}$. From \eqref{ikle1} notice that, in fact, $[\hat A \; \hat B ]$ is the solution of the least-squares problem
\begin{equation*}
	\min_{\hat A, \hat B} \left\| \begin{bmatrix}  z_{\ell+1}^{k_1} & \cdots & z_{\ell+1}^{k_\mu} \end{bmatrix} - \begin{bmatrix} \hat A & \hat B \end{bmatrix} Z \right\|^2.
\end{equation*}
Therefore, using the data from \eqref{ikle1}, Algorithm~2 solves the optimal control problem for the least-squares linear approximation of \eqref{ikle1}, but without computing such matrices. Now, the algorithm converges if $K^0$ is stabilizing for the system defined by $(\hat A, \; \hat B)$.

We finally comment on the case when the input-output trajectories of \eqref{nlsys} are trajectories of the system
\begin{subequations}
	\begin{align}
		z_{t+1} & = \bar A z_t + \bar B u_t + \varepsilon_t,\\
		y_t & = \bar C z_t + v_t,
	\end{align}
	\label{ikle2}%
\end{subequations}%
where  $\varepsilon_t \in \R^{\eta}$ and $v_t \in \R^p$ are the modeling errors, and they can account for noise as well. The first issue in this case is that the state $z$ in \eqref{zdef} must be computed from ``noisy" outputs. This often makes the matrix \eqref{hankm} have full rank, preventing an adequate computation of $\Gamma$. If the mismatch terms $\varepsilon$ and $v$ have sufficiently small magnitudes, this problem can be solved by following a singular value decomposition approach, as suggested in \cite{AlsaltiLopMulTAC2025}.

Moreover, the additional term $v_t$ in \eqref{ikle2} implies that the matrix $W$ now has the form
\begin{equation*}
	W = \begin{bmatrix} \bar C z_\ell^{k_1} + v_\ell^{k_1} & \cdots & \bar C z_\ell^{k_\mu} + v_\ell^{k_\mu} \\ u_\ell^{k_1} & \cdots & u_\ell^{k_\mu} \end{bmatrix} = \C Z + V,
\end{equation*}
where the columns of $V$ are $[(v_\ell^{k_j})^\top \; 0]^\top$, $j=1,\ldots,\mu$. Hence, as in the previous case, from \eqref{bellman} we get
\begin{equation*}
	\Theta^{i+1} = \hat Q  + \left( \Phi_i + \mathcal{K}^i E Z^{-1} \right)^\top \Theta^{i+1} \left( \Phi_i + \mathcal{K}^i E Z^{-1} \right),
\end{equation*}
with $\hat Q = \left( \C + V Z^{-1} \right)^\top \bar Q \left( \C + V Z^{-1} \right)$. Since $\bar Q$ contains the cost's weights, this implies solving an optimization problem for a ``disturbed cost", as well as considering the least-squares system given by $(\hat A, \hat B)$ as in the previous case. It is a matter of future research to investigate how this affects the controller performance.

In the following section, we show how the proposed Q-learning controller compares with a traditional RL algorithm that uses neural network approximations.

%\addtolength{\textheight}{-2.5cm}   %

%%%%%%%%%%%%%%%%%%%%%%%%%%%%%%%%%%%%%%%%%%%%%%%%%%%%%%%%%%%%%%%%%%%%%%%%%%%%%%%%
\section{SIMULATION EXAMPLE}
\label{secsim}

Consider a system with $x_t \in \R^3$ and $u_t \in \R^2$ given by
\begin{equation*}
	x_{t+1} = \begin{bmatrix}
		0.7 x_{1,t} \\ 0.9 x_{2,t} + x_{1,t}^3 - x_{1,t}^2 \\ 0.8 x_{3,t} -x_{2,t} + x_{1,t}^5
	\end{bmatrix} + \begin{bmatrix}
	0 & 0 \\ 1 & 0 \\ 0 & 1
	\end{bmatrix} u_t.
\end{equation*}
The output is $y_t=x_t \in \R^3$, i.e., state measurements are available. This is not necessary for the application of our method, but it allows to compare with the RL algorithm in $\cite{DierksJag2011}$ that requires state information. This system admits a KLE, for example by taking $\Psi(x_t)=[x_{1,t} \;\; x_{2,t} \;\; x_{3,t} \;\; x_{1,t}^2 \;\; x_{1,t}^3 \;\; x_{1,t}^5]^\top \in \R^6$. Information about these functions is not used in Algorithm~2. The cost function is chosen as \eqref{cost} with $Q=I_3$ and $R=I_2$. 

Data is collected by applying a uniformly distributed random input to the system, $u \in \mathcal{U}(-1,1)^m$. The rank condition in Definition~2 was verified. Since the system is stable, $K^0 = 0$ is an initial stabilizing policy. We compare the performance of Algorithm~2 with the RL algorithm in \cite{DierksJag2011}, where NNs as in \eqref{actorcritic} are used for the actor-critic approximators. Since the user has no knowledge of the system model, we test two different sets of basis functions choices. First, the basis functions in \eqref{actorcritic} are chosen as $\phi_c(x_t) = [(x_t \otimes x_t)^\top \quad x_{1,t}^4 \quad x_{1,t}^6]^\top$ and $\phi_a(x_t) = [x_t^\top \quad (x_t \otimes x_t)^\top]^\top$, where $\otimes$ is the Kronecker product (call this example RL-NN1). This choice of functions can only inexactly approximate the optimal controller. Then (example RL-NN2), we repeat the experiment but using the basis functions $\phi_c(x_t) = \Psi(x_t) \otimes \Psi(x_t)$ and $\phi_a(x_t) = \Psi(x_t)$, where $\Psi$ was defined above. This selection of basis functions is rich enough to yield the optimal controller, if the optimal weights $W_c, \, W_a$ are obtained. 

After running each of the algorithms, we apply the resulting controllers to the system for the same initial conditions, and compute the corresponding costs. Algorithm~2 is run for only ten iterations. To achieve an acceptable performance with the algorithm in \cite{DierksJag2011}, we execute it for 10000 iterations. The resulting controllers are applied to the system 100 times for random initial conditions. The relative error between the resulting cost and the optimal cost (obtained from model knowledge) is computed and averaged. The results, together with the computation time, are displayed in Table~\ref{tab1}. This shows the superior performance of our algorithm. Note that, although the example RL-NN2 had the potential to reach optimality, it failed to do so because the training method in \cite{DierksJag2011} does not guarantee convergence of the NN weights to their ideal values. The proposed Algorithm~2 achieves optimal performance with a much lower computation time, and without the need to know the vector of functions $\Psi(x)$.  

\begin{table}
	\centering
	\begin{tabular}{|c|c|c|c|}
		\hline 
		& \# iter & avg. cost error  & avg. time ($\mathrm{s}$) \\
		\hline 
		QL
		& $10$ & $1.0127\times 10^{-16}$ & $2.012 \times 10^{-3}$\\    
		\hline
		RL-NN1
		& $10^4$  & $4.3547\times 10^{-2}$ & $7.676 \times 10^{0}$\\    
		\hline
		RL-NN2
		& $10^4$ & $1.0167\times 10^{-3}$ & $9.089 \times 10^{0}$\\    
		\hline
	\end{tabular}
	\vspace{4pt}
	\caption{Comparison between Algorithm 2 (QL) and the RL algorithm in \cite{DierksJag2011} for two different NNs (RL-NN1 and RL-NN2).}
	\label{tab1}
\end{table}%

%%%%%%%%%%%%%%%%%%%%%%%%%%%%%%%%%%%%%%%%%%%%%%%%%%%%%%%%%%%%%%%%%%%%%%%%%%%%%%%%
\section{CONCLUSION}
\label{secconc}

In this paper, we showed that a Q-learning algorithm with strong stability and convergence properties can be used to determine optimal controllers for the class of nonlinear systems that admit a Koopman linear embedding. Conditions for the existence of an exact KLE are given in \cite{Shangetal2026}, and they include, for example, certain classes of polynomial systems (cf. the system in Section~\ref{secsim}). In this case, the optimal controller can be obtained without any knowledge of the mathematical model, the lifting functions, or the system state. From the analysis in Section~\ref{secinex} it can be noted that, when the KLE is inexact, as the magnitude of the approximation error goes to zero, the optimal controller is recovered. Future research includes designing algorithms with robustness properties that can better overcome the KLE approximation errors that are encountered in most applications.

%\addtolength{\textheight}{-3cm}   % This command serves to balance the column lengths
% on the last page of the document manually. It shortens
% the textheight of the last page by a suitable amount.
% This command does not take effect until the next page
% so it should come on the page before the last. Make
% sure that you do not shorten the textheight too much.

\bibliographystyle{IEEEtran}
\bibliography{IEEEabrv,klerl_refs}

\end{document}